\newcommand{\beq}{\begin{equation}}
\newcommand{\eeq}{\end{equation}}
\newcommand{\beqa}{\begin{eqnarray}}
	\newcommand{\eeqa}{\end{eqnarray}}
\newtheorem{theorem}{Theorem}
\newtheorem{defn}{Definition}
\begin{document}

\title{
Communication Through a Quantum Link
}

\date{\today}

 \author{Vittorio \surname{Giovannetti}}
    \email{v.giovannetti@sns.it}

\address{NEST CNR-INFM \&  Scuola Normale Superiore,
I-56126 Pisa, Italy}
 
 \author{Daniel \surname{Burgarth}}
\email{daniel.burgarth@maths.ox.ac.uk}
\address{Mathematical Institute, University of Oxford,
24-29 St Giles¡Ç Oxford OX1 3LB, UK}

\author{Stefano \surname{Mancini}}
\email{stefano.mancini@unicam.it}
\address{Dipartimento di Fisica, Universit\`a di Camerino,
I-62032 Camerino, Italy}

\begin{abstract}
{ A chain
of interacting spin behaves like a quantum mediator (quantum link)  which allows two distant parties that control the ends of the chain to exchange quantum messages. 
We show that over repeated uses without resetting the study of a quantum link  
can be connected to correlated quantum channels with
finite dimensional environment (finite memory quantum channel).
Then, using coding arguments for such kind of  channels
and results on mixing channels we  present a protocol that
allows us to achieve perfect information transmission through a quantum link. }
\end{abstract}

\pacs{03.67.Hk, 03.67.Pp}

\maketitle

\section{Introduction}

{
Recently, an increasing attention has been devoted to interacting quantum systems in order to accomplish communication tasks. In fact, their evolution by means of quantum interference effects naturally leads to information transfer from one to another part located in a different place. A paradigmatic example is a chain of interacting spins (or more broadly speaking a spin network) -- see Ref.~\cite{BOSEREV} and references therein.
Here two distant parties (say the sender Alice and the receiver Bob) try to exchange quantum messages by operating on separate ends of  a chain of interacting qubits. 
Therefore the chain behaves like a mediator of quantum information, or like a \emph{quantum link}.

The information Alice sends through the link can get stuck into the link itself, thus resulting in imperfect transmission.  The faithfulness of information transfer has been widely investigated by  considering the link to be reset on each use 
either  by means of some external control operating directly on the whole chain, or by means of some
clever but costly "downloading" procedure~\cite{BOSEREV}.
A multi-use quantum communication scenario without resetting is intriguing as well. 
Actually a spin chain without resetting has been proposed as a physical model for quantum channel with memory \cite{BM04}. A preliminary study of such complex
communication lines has been carried out
in Ref.~\cite{RGF} by computing the transmission rates (i.e. number of transferred qubits per unit time) of some simple multi-use  protocols, and in  Ref.~\cite{BBMB08}
by focusing on the two channel uses scenario of some specific spin chain models.
Moving from such arguments, we study here the asymptotic (large number of uses)  behavior of a quantum link without resetting.

In particular we shall establish a connection between quantum link communication and 
a special class of correlated quantum channels, the finite memory channels,  that allows us to 
devise a new communication strategy. 
Indeed using coding arguments for finite memory
 channels and some results from mixing channels we  present a protocol that
allows us to achieve perfect information transmission through the quantum link. 
It results in the first (efficient) communication protocol for the multi-use scenario of spin chains.

The layout of the paper is the following.
In Section II we introduce the notion of perfect memory channels and we discuss coding arguments for them. In Section III we present a general communication scheme through a quantum link.
Then, in Section IV we perform an information flow analysis by using the coding arguments previously developed and results on mixing channels.
Finally, in Section V we present a protocol which allows us to achieve perfect information transmission through a spin chain. Section VI is for conclusions. 
}

\section{Perfect memory channels}
\label{sec:PM}

In the multi-use quantum communication scenario the sender of information  Alice  transmits  (classical or quantum) messages to her intended receiver Bob
by encoding them in the internal states of a (possibly infinitely long) ordered {\em sequences} of  information carriers
${X}:={x}_1, {x}_2,  \cdots$.  The latter are described as identical quantum systems characterized by the
 Hilbert spaces ${\cal H}_{{x}_1}$, ${\cal H}_{{x}_2}$,
 $\cdots$ of { the same} dimension $d:=\dim[ {\cal H}_{x}]$.
 Owing to the  noise that affects the communication, the messages received by Bob are a corrupted version of the input signals.
This process is formally described by assigning a {\em multi-use quantum channel}, i.e.  a collection 
${\cal L}:=\{ \Lambda^{(n)} : n\in \mathbb{N}\}$
 of completely positive trace-preserving (CPTP) maps~\cite{NIELSEN} $\Lambda^{(n)}$  connecting the input states of the carriers with their output 
 counterparts.
Specifically for any { positive integer} $n$, $\Lambda^{(n)}$ is the transformation 
that operates on the density matrices  $\rho_{X}$  of the Hilbert space 
${\cal H}_{X}^{(n)} = {\cal H}_{{x}_1} \otimes \cdots \otimes  {\cal H}_{{x}_n}$
associated to  the first $n$ carriers of ${X}$, i.e. 
\begin{eqnarray}
\Lambda^{(n)} : \qquad \rho_{X} \;\;\longrightarrow \;\;\Lambda^{(n)}(\rho_{X})\;, \label{mp1}
\end{eqnarray}
 under the minimal consistency requirement that
 the output $\Lambda^{(n-1)}(\rho_{X}^{(n-1)})$ associated with the
 first $n-1$ carriers should be obtained from~Eq.~(\ref{mp1})
by taking the partial trace with respect to  the $n$-th carrier.
In this context  ${\cal L}$  is said to represent  a {\em memoryless quantum channel} if for all $n$ the transformation~(\ref{mp1}) 
can be described  as a {\em tensor product} of the  CPTP map
$\Lambda:=\Lambda^{(1)}$ 
that acts on the states of the first carrier, i.e. 
\begin{eqnarray}
\Lambda^{(n)}(\rho_{X}) =\Lambda^{\otimes n} (\rho_{X})\;, \label{mp2}
\end{eqnarray}
with $\Lambda^{\otimes n} := \Lambda \otimes \cdots \otimes \Lambda$.
If Eq.~(\ref{mp2}) does not apply one says instead that ${\cal L}$
represents a {\em correlated} channel. Furthermore one says that it
  represents a {\em memory} channel if the sequence
$\Lambda^{(1)}, \Lambda^{(2)}, \cdots,$ possesses a causal
structure (i.e. if for all $n$, the output states of the first $n$-th carries
${x}_1, {x}_2, \cdots, {x}_n$ do not depends upon the input states of the subsequent carriers)~\cite{BM04,KW05,vg05}.

It is well known~\cite{LIND,Stine} that any CPTP map admits  unitary dilations
that allow one to represent it in terms of  a unitary coupling with an 
external environment. In particular 
for the $n$-th element of ${\cal L}$
 we can write
 \begin{eqnarray}
\Lambda^{(n)}(\rho_{X}) &=& \mathrm{Tr}_{Y} \Big[ 
U_{{X}Y}
 \big(\rho_{X} \otimes 
\omega_Y \big)
U_{{X}Y}^\dag \Big]\;,
\label{eqn:memory_model}
\end{eqnarray}
where  $U_{{X}Y}$  is a
unitary that couples the $n$ carriers' { state $\rho_{X}$} to the 
 state  $\omega_Y$ of a multi-use environment $Y$ described by the
 Hilbert space  ${\cal H}_Y^{(n)}$.
 Upon purification  one can always choose $\omega_Y$ to be a pure vector $|\omega\rangle_Y$: when this happens the dilation~(\ref{eqn:memory_model}) is said to be of Stinespring form~\cite{Stine} and it is unique up to some irrelevant isometry acting on the environment $Y$.
 The unitary dilations~(\ref{eqn:memory_model}) can also be  put in a one-to-one correspondence
with the {\em operator sum} (or {\em Kraus}) representations of $\Lambda^{(n)}$~\cite{KRAUSBOOK},
  \begin{eqnarray}
\Lambda^{(n)}(\rho_{X}) &=& \sum_{{j}=0}^{d_Y^{(n)}-1}
\; \; K_{j} \; \rho_{X} \; K_{j}^\dag \label{eqn:kr_memory_model}\;,
\end{eqnarray}
where $d_Y^{(n)} :=\mbox{dim}[{\cal H}_Y^{(n)}]$ and
$K_{j} := {_Y\langle}\xi_{j}| U_{{X}Y} |\omega \rangle_Y$ with
 $\{ |\xi_{j}\rangle_Y; j=0, \cdots, d_Y^{(n)}-1\}$ being an orthonornomal basis of ${\cal H}_Y^{(n)}$.

From the uniqueness of Stinespring representation~\cite{Stine} it
follows that 
(apart from the trivial case
  of noiseless, or unitary, transformations), the memoryless channels are
  characterized by possessing unitary dilations in which the environment has a dimension which is exponentially growing in $n$ (i.e.
  $\log_2[ \mbox{dim}{\cal H}_Y^{(n)} ] = n \log_2[ \mbox{dim}{\cal H}_Y^{(1)}]$)
  or, equivalently, by possessing a (minimal) operator sum representations whose Kraus sets contains a number of elements which
  is  exponentially growing in $n$.
  This same property typically holds also for memory  channels with the important exception of the so called {\em perfect memory channels}~\cite{BM04,KW05,vg05}. They are characterized by the property of admitting
  unitary dilations~(\ref{eqn:memory_model}) in which the dimension of the environment ${\cal H}_Y^{(n)}$ is constant in $n$ (the extremal case
  being represented by the noiseless channels in which $\mbox{dim}[{\cal H}_Y^{(n)}] =0$ for all $n$).
 Such class of channels may be extended to include all sequences of CPTP maps that have a representation with a finite upper bound on the dimension of the environment state. 
More specifically:
\begin{defn}\label{def:memch}
A multi-use quantum communication channel ${\cal L}$  defined by the sequence of CPTP maps $\Lambda^{(n)}$ is termed as a {\em Perfect Memory (PM)} channel if there exists a sequence of unitary representations~(\ref{eqn:memory_model})
of the $\Lambda^{(n)}$s
such that,
\begin{equation}\label{pro1}
\lim_{n\rightarrow \infty} \frac{1}{n}\log_2 [d_Y^{(n)}] = 0 \;,
\end{equation}
where $d_Y^{(n)}$ are the dimension of the  environmental Hilbert spaces  $\mathcal{H}_Y^{(n)}$ 
that enter in the dilation. Equivalently ${\cal L}$ 
is said {\em PM}  if it admits a sequence of operator sum representations characterized by a number of elements $d_Y^{(n)}$ which satisfy Eq.~(\ref{pro1}).
\end{defn}
Physically speaking the property~(\ref{pro1})  means that the size of the environment $E$ does not grow fast enough to capture all the information that is sent through the channel (the latter being measured by the size of 
the carriers ${\cal H}_{X}^{(n)}$ which is exponential in $n$). Intuitively we thus expect that  PM channels should allow for
efficient communication between Alice and Bob. This was formalized in Ref.~\cite{KW05} by showing that PM channels
have indeed optimal (classical and quantum) transmission rates, i.e. allow the transfer of $\log_2 d$ qubit  per channel use in the asymptotic limit of large $n$. 
 
In particular one can verify that the following theorems hold:

\begin{theorem}
Let ${\cal L}$ be a multi-use PM channel
operating on information carriers of dimension $d$  and let  $\{ d_Y^{(n)}: n\in \mathbb{N}\}$ be  the sequence satisfying Eq.~(\ref{pro1}).
Then for sufficiently large $n$
there exists a zero-error classical code ${\cal C}^{(n)}_X$ of size
\begin{equation}
|{\cal C}^{(n)}_X| \geqslant \frac{d^n}{(d_Y^{(n)})^2}\;,
\end{equation}
corresponding to a bit  transmission   rate
$R_{\mathcal{C}}^{(n)} := \tfrac{1}{n} \log_2 |{\cal C}^{(n)}_X|$ 
that  converges to the
optimal value $\log_2 d$  for $n\rightarrow \infty$.
\end{theorem}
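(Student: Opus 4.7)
The plan is to construct the code greedily from an operator-sum representation with few Kraus operators. By Definition~\ref{def:memch}, one can fix a representation as in Eq.~(\ref{eqn:kr_memory_model}) with $d_Y^{(n)}$ Kraus operators $\{K_j\}_{j=0}^{d_Y^{(n)}-1}$ whose count satisfies Eq.~(\ref{pro1}). Restricting to pure-state codewords $|\psi_i\rangle \in \mathcal{H}_X^{(n)}$, the output $\Lambda^{(n)}(|\psi_i\rangle\langle\psi_i|)=\sum_j K_j|\psi_i\rangle\langle\psi_i|K_j^\dagger$ is supported on $\mathrm{span}\{K_j|\psi_i\rangle\}_j$, so the zero-error distinguishability condition (pairwise orthogonal output supports) becomes the family of scalar equations
\[ \langle \psi_i | K_j^\dagger K_k | \psi_{i'}\rangle = 0 \qquad \forall\, j,k,\ \forall\, i\neq i'. \]
There are at most $(d_Y^{(n)})^2$ such equations per pair of codewords, and this is the only algebraic input needed.

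I would then select codewords inductively. Pick any unit $|\psi_1\rangle$; given $|\psi_1\rangle,\ldots,|\psi_m\rangle$ already meeting the zero-error condition pairwise, seek $|\psi_{m+1}\rangle$ satisfying $\langle \psi_{m+1} | K_j^\dagger K_k |\psi_l\rangle = 0$ for every $l\le m$ and every $j,k$. Equivalently $|\psi_{m+1}\rangle$ must be orthogonal to the subspace
\[ S_m := \mathrm{span}\bigl\{ K_j^\dagger K_k |\psi_l\rangle : l\le m,\ 0\le j,k\le d_Y^{(n)}-1\bigr\}, \]
whose dimension is at most $m(d_Y^{(n)})^2$. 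The conjugate conditions $\langle\psi_l|K_j^\dagger K_k|\psi_{m+1}\rangle=0$ are covered by swapping the index pair $(j,k)$; moreover, since trace preservation gives $I=\sum_j K_j^\dagger K_j \in \mathrm{span}\{K_j^\dagger K_k\}$, orthogonality to $S_m$ already forces $\langle\psi_{m+1}|\psi_l\rangle=0$, so the codebook is automatically orthonormal. A valid $|\psi_{m+1}\rangle$ therefore exists whenever $m(d_Y^{(n)})^2 < d^n$, producing a codebook with
\[ |\mathcal{C}^{(n)}_X| \;\geq\; \left\lceil d^n/(d_Y^{(n)})^2\right\rceil . \]

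Taking logarithms yields $R_\mathcal{C}^{(n)} \geq \log_2 d - \tfrac{2}{n}\log_2 d_Y^{(n)}$, and the correction vanishes by Eq.~(\ref{pro1}); combined with the trivial upper bound $R_\mathcal{C}^{(n)}\le \log_2 d$ this gives convergence to the optimal value $\log_2 d$. The ``sufficiently large $n$'' qualifier is just the threshold beyond which $(d_Y^{(n)})^2<d^n$, which is eventually guaranteed by the PM condition.

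The delicate part is not computational but bookkeeping: I need to verify that pairwise orthogonality of output supports is completely captured by the single family $\langle\psi_i|K_j^\dagger K_k|\psi_{i'}\rangle=0$ with $(j,k)$ ranging freely, and that the trace-preservation identity $\sum_j K_j^\dagger K_j = I$ lets the avoidance of $S_m$ subsume the codeword-orthogonality constraints. Both points are elementary, so once the reduction to the Kraus picture is in place, the theorem reduces to the dimension count above.
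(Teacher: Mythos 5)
Your proposal is correct and follows essentially the same route as the paper: a greedy extension of the codeword set, where each new codeword is chosen orthogonal to the at-most-$\ell (d_Y^{(n)})^2$-dimensional span of the vectors $K_i^\dagger K_j |c_k\rangle$, yielding the same $d^n/(d_Y^{(n)})^2$ bound and rate estimate. Your added observation that trace preservation ($\sum_j K_j^\dagger K_j = I$) makes codeword orthonormality automatic is a nice tidying of a point the paper leaves implicit, but it does not change the argument.
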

\begin{proof}
Given { a positive integer} $n$, let $K_j$ the $d_Y^{(n)}$ 
 Kraus operators associated with $n$-th element of ${\cal L}$, i.e.  the CPTP map $\Lambda^{(n)}$ that operates on the first $n$ carriers.
A zero-error classical code that corrects the noise introduced by $\Lambda^{(n)}$
is a collection ${\cal C}^{(n)}_X$ of (orthonormal) codewords $|c_k\rangle_X \in {\cal H}_{X}^{(n)}$ which must obey the conditions,
\begin{equation}
{_X\langle} c_k| K_i^{\dag}K_{j}|{c_{k'}}\rangle_{X} = \delta_{kk'}\; M_{ij}{(k)}\;,
\label{eqn:classcond}
\end{equation}
for all codewords $|c_k\rangle_X$, $|c_{k'}\rangle_X$, and for all $i,j$, 
with $M_{ij}{(1)}, M_{ij}(2), \cdots,$ being $(d_Y^{(n)})^2 \times  (d_Y^{(n)})^2$ Hermitian matrices~\cite{NIELSEN,KLV00}.  These conditions imply that the support of the output states for all codewords are orthogonal, and hence may be distinguished with zero probability of error.
Suppose to have found $\ell$ orthogonal codewords $|c_1\rangle_X,\ldots,|c_\ell\rangle_X$ satisfying the condition \eqref{eqn:classcond}
(this is always possible for at least $\ell=2$). Then an additional codeword $|c_{\ell+1}^{(n)}\rangle_X$ can be chosen such that 
\begin{equation}
{_X \langle} c_{\ell+1}| K_i^{\dag}K_{j} |c_k\rangle_X =0 \;,
\label{eqn:perpcond}
\end{equation}
for all $i,j$ and for all $k=1,\cdots, \ell$.
Such a state exists provided that the total number of vectors $K_i^{\dag}K_{j}|c_k\rangle_X$ is less than or equal to
 the dimension $d^n$ of the input space ${\cal H}_{X}^{(n)}$, i.e. 
$\ell (d_Y^{(n)})^2\leqslant d^n$ (notice that, due to the sub-exponential character of $d_Y^{(n)}$, for PM channels this inequality can be always satisfied for some positive $\ell$, if  $n$ is sufficiently large). Hence we may continue the procedure until the set of codewords cannot be extended. That is we can get a code with at least  $d^n/(d_Y^{(n)})^2$ orthogonal codewords that can be transmitted with zero error. This corresponds to a rate (i.e. ratio of the faithfully transferred classical bits over the number of channel uses) larger than $\log_2 d - \frac{2}{n}\log_2 d_Y^{(n)}$ which converges to the maximum value $\log_2 d$ attainable when using $d$-dimensional information carriers.
\end{proof}

To construct a quantum error correcting code, we can utilize the result of Ref.~\cite{KLV00} in building a quantum code from the existing zero-error classical code.

\begin{theorem}
Let ${\cal L}$ be a multi-use PM channel 
operating on information carriers of dimension $d$  and
characterized by the sequence $\{ d_Y^{(n)}: n\in \mathbb{N}\}$  { satisfying} Eq.~(\ref{pro1}).
Then for sufficiently large $n$
there exists a zero-error quantum error correcting code ${\cal Q}^{(n)}_X$ of size
\begin{equation}
|{\cal Q}^{(n)}_X| \geqslant \frac{d^n}{(d_Y^{(n)})^4+(d_Y^{(n)})^2}\;,
\end{equation}
corresponding  to a qubit transmission rate 
  $R_{\mathcal{Q}}^{(n)} := \tfrac{1}{n} \log_2 |{\cal Q}^{(n)}_X|$ 
that converges to the
optimal value $\log_2 d$  for $n\rightarrow \infty$.
\end{theorem}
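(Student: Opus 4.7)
The plan is to combine Theorem~1 with the classical-to-quantum code promotion of Knill, Laflamme, and Viola~\cite{KLV00}, which upgrades a zero-error classical code satisfying~(\ref{eqn:classcond}) into a zero-error quantum error correcting code at the price of an additional factor of about $(d_Y^{(n)})^2+1$ in the code size. Explicitly, the quantum Knill-Laflamme condition requires $\langle c_k|K_i^\dag K_j|c_{k'}\rangle_X = \delta_{kk'} M_{ij}$ with $M_{ij}$ \emph{independent} of the codeword index $k$, which is strictly stronger than~(\ref{eqn:classcond}) where the matrix $M_{ij}(k)$ was allowed to vary with $k$. This extra rigidity is what accounts for the additional denominator factor.

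Concretely I would proceed in two stages. First, apply Theorem~1 to obtain a classical zero-error code $\mathcal{C}^{(n)}_X$ of size $|\mathcal{C}^{(n)}_X|\geqslant d^n/(d_Y^{(n)})^2$. Second, extract a quantum subcode $\mathcal{Q}^{(n)}_X$ by an iteration analogous to the one used in the proof of Theorem~1: supposing one has already built $\ell$ quantum codewords $|c_1\rangle_X,\ldots,|c_\ell\rangle_X$ that share a common matrix $M$, the next codeword $|c_{\ell+1}\rangle_X$ is required (i) to be Knill-Laflamme-orthogonal to the previous ones, killing the $\ell(d_Y^{(n)})^2$ vectors $\{K_i^\dag K_j|c_k\rangle_X\}$, and (ii) to reproduce the same diagonal matrix $M_{ij}$. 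The construction of~\cite{KLV00} shows that the $M$-matching condition~(ii) can be enforced at the cost of burning an additional $(d_Y^{(n)})^2$ dimensions of $\mathcal{H}_X^{(n)}$ per new codeword. The iteration therefore continues as long as $\ell\bigl[(d_Y^{(n)})^2+1\bigr](d_Y^{(n)})^2 < d^n$, yielding at least $|\mathcal{Q}^{(n)}_X|\geqslant d^n/[(d_Y^{(n)})^4+(d_Y^{(n)})^2]$ quantum codewords.

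The rate bound follows immediately: $R_\mathcal{Q}^{(n)} \geqslant \log_2 d - \tfrac{1}{n}\log_2[(d_Y^{(n)})^4+(d_Y^{(n)})^2]$, and since the PM hypothesis~(\ref{pro1}) gives $\tfrac{1}{n}\log_2 d_Y^{(n)}\to 0$, the correction term vanishes and $R_\mathcal{Q}^{(n)}\to \log_2 d$, as required. The step I expect to be most delicate is~(ii): the $M$-matching condition is quadratic rather than linear in $|c_{\ell+1}\rangle_X$, so the clean dimension count that worked for Theorem~1 does not transfer verbatim. Making it rigorous requires the specific choice of basis in the error algebra spanned by the $K_i^\dag K_j$ supplied by~\cite{KLV00}, which effectively linearises the matching — this is where I would have to be most careful in filling in the algebra.
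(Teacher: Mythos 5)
Your overall strategy --- take the classical zero-error code of Theorem~1 and promote it to a quantum code via the Knill--Laflamme--Viola construction, paying a factor of order $(d_Y^{(n)})^2+1$ in the code size --- is exactly the paper's, and your final bound and rate computation are correct. The gap is precisely in the step you flag as delicate: your proposed mechanism for enforcing the common-matrix condition, an iterative extension in which each new codeword ``burns an additional $(d_Y^{(n)})^2$ dimensions'' to reproduce the same $M_{ij}$, is not a valid argument. Matching a prescribed Hermitian matrix $M_{ij}=\langle q_{\ell+1}|K_i^\dag K_j|q_{\ell+1}\rangle$ is a quadratic constraint on the new codeword, so it cannot be folded into the linear dimension count that worked for Theorem~1, and no choice of basis in the span of the $K_i^\dag K_j$ linearises it. As written, your iteration cannot be completed.

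The missing idea is convex-geometric rather than linear-algebraic. Each classical codeword $|c_k\rangle_X$ carries a matrix $M(k)$, i.e.\ a point in the real affine space of Hermitian $d_Y^{(n)}\times d_Y^{(n)}$ matrices, which has dimension $(d_Y^{(n)})^2$. Because distinct classical codewords already satisfy the full orthogonality ${_X\langle} c_k|K_i^\dag K_j|c_{k'}\rangle_X=0$ for $k\neq k'$, a superposition $|q\rangle_X=\sum_k\sqrt{p_k}\,|c_k\rangle_X$ of codewords drawn from one group has matrix $\sum_k p_k M(k)$ with no cross terms: the quadratic constraint collapses into a \emph{convex combination} of the known points $M(k)$. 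The paper then invokes Radon's theorem in its generalized (Tverberg) form: any $\bigl((d_Y^{(n)})^2+1\bigr)(\ell-1)+1$ points in a $(d_Y^{(n)})^2$-dimensional affine space can be partitioned into $\ell$ groups whose convex hulls share a common point. Taking that common point as the target $M$ and the corresponding convex weights as the $p_k$ produces $\ell$ quantum codewords with identical $M_{ij}$, and the count $\ell\geqslant |{\cal C}^{(n)}_X|/\bigl((d_Y^{(n)})^2+1\bigr)\geqslant d^n/[(d_Y^{(n)})^4+(d_Y^{(n)})^2]$ gives the stated size. With this partition-and-superpose step in place of your iteration, your rate argument goes through unchanged.
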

\begin{proof}
Under the same definitions of Theorem 1, the conditions for an error correcting {\em quantum} code  are given by~\cite{NIELSEN,KLV00}
\begin{equation}
{_X\langle} q_k| K_i^{\dag}K_{j}|{q_{k'}}\rangle_X = \delta_{kk'}\; M_{ij}\;,
\label{eqn:quacon}
\end{equation}
where now the quantum codewords $|q_k\rangle_X$ are a basis for the coding subset ${\cal Q}^{(n)}_X\subseteq {\cal H}_{X}^{(n)}$ and with  $M_{ij}$ a  matrix that {\em does not} depend on the index $k$ of the quantum codewords 
$|q_k\rangle_X$.   
Consider then the classical code ${\cal C}^{(n)}_X$ that we have constructed in the derivation of Theorem 1.
Let us divide it in $\ell$ non overlapping subsets ${\cal C}_X^{(n)}(1)$, $\cdots$, ${\cal C}_X^{(n)}(\ell)$. For each $k=1,\cdots, \ell$
define also the vector $|q_k\rangle_X$  to be a (proper) superposition of the classical codewords that belong to ${\cal C}_X^{(n)}(k)$.
One can easily verify that the set $\{ |q_1\rangle_X, \cdots, |q_\ell\rangle_X\}$ still satisfies the classical code conditions~(\ref{eqn:classcond})
with  matrices $M_{ij}{(k)}$ which are convex convolutions of the previous ones.
The idea is thus to select the partitions ${\cal C}_X^{(n)}(1)$, $\cdots$, ${\cal C}_X^{(n)}(\ell)$ and the associated vectors $\{ |q_1\rangle_X, \cdots, |q_\ell\rangle_X\}$  in such a way that the new matrices $M{(k)}$ will be all identical: if this happens
the vectors $\{ |q_1\rangle_X, \cdots, |q_\ell\rangle_X\}$ will automatically satisfy the condition~(\ref{eqn:quacon}).
 In the end the problem can thus be mapped into a convex
optimization problem:   invoking  a theorem by the Radon~\cite{RADON} it is then possible to show that it admits a solution provided
that  $\ell [(d_Y^{(n)})^4+(d_Y^{(n)})^2]< d^n$ (as before this inequality makes sense if ${\cal L}$ is PM at least 
for large enough $n$).
In this way we can get a code with $d^n/[(d_Y^{(n)})^4+(d_Y^{(n)})^2]$ orthogonal codewords that can be transmitted with zero error. Hence the rate $\log_2 d - \frac{1}{n} \log_2 [(d_Y^{(n)})^4+(d_Y^{(n)})^2]$ can be attained.
If the final term scales such that it vanishes in the asymptotic limit, then the channel is asymptotically noiseless.  This applies to all cases where $d_Y^{(n)}$ is sub-exponential in $n$.
\end{proof}

\subsection{Decoding transformation for PM channels}\label{sec:dec}

Even though this is a straightforward application of quantum error correction procedures~\cite{NIELSEN} it is  useful to give a close look at decoding strategy associated with Theorem 2. 
By construction the $(d_Y^{(n)})^2 \times  (d_Y^{(n)})^2$ matrix  $M_{ij}$ of Eq.~(\ref{eqn:quacon})
 is positive semidefinite. Take  then $O_{ij}$ the unitary matrix that diagonalizes $M_{ij}$ 
and define the $d_Y^{(n)}$ operators 
$F_{j} := \sum_{i}
O_{ij} \; K_{i}$. They provide an alternative sum operator 
decomposition of $\Lambda^{(n)}$
 and satisfy the orthogonality condition
\begin{equation}
{_{X}\langle} q_k|F_i^{\dag}F_{j}|{ q_{k'}}\rangle_{X} = \delta_{kk'}\; \delta_{ij} \; \lambda_j \;,
\label{eqn:quacon11}
\end{equation}
for all the vectors $|q_k\rangle_X$ that form a basis of ${\cal Q}^{(n)}_X$, with $\lambda_j\geqslant 0$ the eigenvalues of $M_{ij}$.
By  polar decomposition we can then write 
\begin{eqnarray}\label{EEE2}
F_j P = \sqrt{\lambda_j} \; U_j \; P = \sqrt{\lambda_j} \; P_j \; U_j\;,
\end{eqnarray}
with $U_j$ being a unitary transformation, $P$ being the projector on ${\cal Q}^{(n)}_X$ { such that} $P_j :=U_j P U_j^\dag$. From Eq.~(\ref{eqn:quacon11}) it follows that the projectors $P_j$ are orthogonal,
i.e. $P_i P_{j} = \delta_{ij} \; P_j$. Consider thus a generic state $\rho_{X}$ of ${\cal Q}^{(n)}_X$, i.e. 
$P \rho_{X} P = \rho_{X}$. Equation (\ref{EEE2}) allow us the to express the output state 
as 
\begin{eqnarray}\label{eee}
\Lambda^{(n)} (\rho_{X}) = \sum_j \; \lambda_j \; P_j\; \big[ U_j \; \rho_{X}\;  U_j^\dag \big] \; P_j^\dag \;,
\end{eqnarray}
which is explicitly written in block form thanks to the orthogonality conditions of the $P_j$s.
We can hence recover $\rho_{X}$ through the following steps:
first perform a projective measurement on $\Lambda^{(n)} (\rho_{X})$
that distinguishes among the
orthogonal subspaces of ${\cal H}_{X}^{(n)}$ associated with the $P_j$.
With probability $\lambda_j$ we will get the outcome $j$. Apply then the unitary rotation $U_j^\dag$ to the projected state:
independently from the measurement outcome the final state will  be transformed in the input message $\rho_{X}$.

Before concluding this section we would like to spend few words on the global evolution that $X$ and $Y$ undergo during the
encoding-decoding stages: this will play an important role in the subsequent sections.
Consider then $|\psi\rangle_{X}\in{\cal Q}^{(n)}_X$. Using the unitary representation~(\ref{eqn:memory_model}) associated with the Kraus set $K_j$ and exploiting the above identities 
we get:
\begin{eqnarray}
&&U_{{X}E} (|\psi\rangle_{X} \otimes | \omega\rangle_Y)
= \sum_{j}  (K_j | \psi\rangle_{X})  \otimes |\xi_j\rangle_Y \nonumber
\\ &&\qquad = \sum_J \sqrt{\lambda_j} 
\; (P_j \; U_j  | \psi\rangle_{X})  \otimes | \zeta_j\rangle_Y \;,\label{eewe}
\end{eqnarray}
where $|\zeta_j\rangle_Y:= \sum_{i} O_{ij}^* |\xi_i\rangle_Y$ is an orthonormal set of ${\cal H}_Y^{(n)}$.
Of course by taking the partial trace with respect to $Y$ yields the final state of Eq.~(\ref{eee}). 
What it is interesting for us however is to observe that {\em before} the decoding stage $X$ and $Y$ 
are in general entangled. Furthermore we notice that due to the orthogonality condition of the $P_j$ and $|\zeta_j\rangle_Y$,
  the vector~(\ref{eewe})  is automatically written in { the} Schmidt form.

\section{Communication by a quantum mediator}\label{cbqm}
As anticipated  in the introduction quantum networks communication~\cite{BOSEREV}
can   be seen as a particular instance of the communication scenario sketched in Fig.~\ref{fig2}.
Here the {\em mediator} $M$ is a composite 
quantum object of {\em finite} dimension $d_M$, that is composed by 
 three subsystems $M_A$, $M_C$, and $M_B$ which interact through some 
given Hamiltonian  $H$.  $M$ acts as  an effective quantum channel that connects two distant parties, the sender of information Alice and the receiver Bob
who are supplied with the quantum registers $A$ and $B$, respectively.
The register $A$ is assumed to be composed by a sequence of ordered memories $a_1,  a_2, \cdots$. In the communication scenario we consider, Alice "writes" on $A$ the quantum messages
she wants to communicate to Bob. 

\begin{figure}[t!]
\begin{center}
\includegraphics[scale=.7]{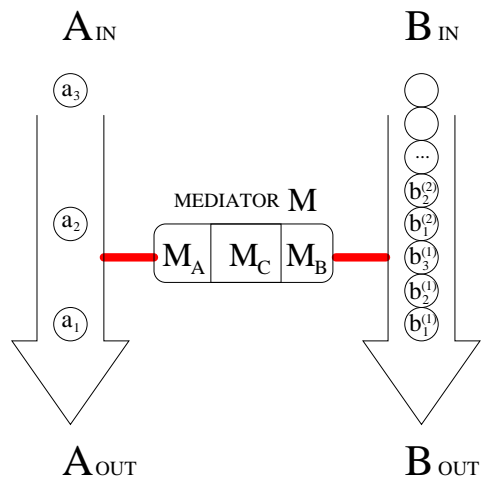}
\caption{Communication through a quantum link. Alice sends to Bob the messages she has stored in the quantum memories $a_1,a_2, \cdots a_n$
by coupling them  with the subsystem $M_A$ of the mediator $M$; each memory { element} interacts only once with $M_A$ following the sequential order
indicated by the left arrow of the figure 
(i.e. first $a_1$, then $a_2$, and so on). Bob recovers  Alice's messages by preparing each one of his quantum memories $b$ in the same fixed fiduciary state $|\nu\rangle$ and by 
coupling them with the subsystem $M_B$ of
$M$ (again the coupling will follow the sequential order indicated by the right arrow). The red lines of the figure represent the couplings between the registers
and $M$. $A_{in}$, $B_{in}$ represent the input ports of the device which are used by Alice and Bob to bring  their qubits in contact with $M$. Similarly 
$A_{out}$ and $B_{out}$ represent the output ports from which  the qubits emerge after their interaction with the mediator.  
} \label{fig2}
\end{center}
\end{figure}
The memories $a_1, a_2, \cdots$  are then sequentially put in contact with
 the subsystem  
$M_A$ of 
$M$ following a fixed schedule in which first  $M_A$ couples with  $a_1$, then with $a_2$, etc. Such interactions are assumed to be identical ({\em uniform coupling regime}), and faster and stronger than the free  evolution of the mediator. Consequently we will 
represent them in terms of a collection of two-body  gates $S_{a_1}, S_{a_2}, \ldots, S_{a_n}$ that connect the corresponding memory elements and $M_A$. 
As indicated  in Fig.~\ref{fig2} the whole process
can  be described  as if the memories were moving  in a line from the port $A_{in}$ to the port $A_{out}$ and interact with $M_A$ only when passing close to it.
Conversely  Bob's register $B$ is composed by a sequence of memories each initialized in the same fiduciary state $\nu:= |\nu\rangle\langle \nu|$.
They are grouped in independent and not necessarily uniform sub-registers $B^{(1)}, B^{(2)}, \cdots$. The $k$-th sub-register $B^{(k)}$ contains
$m_k$ memories indicated by the subscript $b_1^{(k)}, b_2^{(k)}, \cdots, b_{m_k}^{(k)}$ which are supposed to receive and store the info contained in Alice's $k$-th input qubit $a_k$~\cite{NOTA2}. To do so, during the time interval that elapses between Alice $k$-th operation and the subsequent one, 
Bob will put the  qubits  of the sub-register $B^{(k)}$ in contact with $M$ by applying 
a series of fast quantum gates  $S_{b^{(k)}_\ell}$ that couple the memories $b^{(k)}_\ell$ with the subsystem $M_B$
(again one can represent this process as if Bob's memory elements were propagating from the port $B_{in}$ of Fig.~\ref{fig2}
to the port $B_{out}$ interacting with $M_B$ only when passing close to it).
For each $k\in \mathbb{N}$ this yields the following unitary transformation acting on $M$ and $B^{(k)}$ 
\beq\label{vk}
V_{k}:=
S_{b^{(k)}_{m_k}} e^{- i H \tau}
\ldots 
S_{b^{(k)}_2}e^{- i H \tau}
 S_{b^{(k)}_1} e^{- i H \tau}
 \;,
\eeq
where $e^{ - i H \tau}$ describes the free evolution of $M$  between two consecutive quantum gates.
In writing Eq.~(\ref{vk})  we have assumed that, as in Alice case, the couplings introduced by Bob are uniform and operate on a time scale much shorter than those of $H$. We assumed also that 
uniform time intervals $\tau$ elapses among any two consecutive $S_{b^{(k)}_{\ell}}$. The global unitary transformation of   $ABM$ 
during the transmission  can  then be expressed by composing the transformation $V_k$ with Alice's gates
$S_{a_k}$.  Specifically, 
suppose that Alice  use only  the first  $n$ elements of  $A$ for the communication and suppose that she
prepares them in the state $\rho_A$.
The initial state of the global system reads then 
$\rho_A\otimes \nu^{\otimes m}_B
\otimes \omega_M$,
with  $m= \sum_{k=1}^n m_k$ being the total number of Bob's memories that play an active role  in the
transmission of the first $n$ elements of $A$, $\nu^{\otimes m}_B$ being their input state 
 $\nu^{\otimes m}_B:=\nu_{b_1^{(1)}} \otimes \cdots  \otimes \nu_{b_{m_n}^{(n)}}$, 
 and with $\omega_M:= |\omega\rangle_M \langle \omega|$  being the initial state of the mediator $M$ that  we will
assume to be pure (the generalization to the mixed case being trivial).
Bob's output states are thus given by 
\beq \label{sigmaB}
\Lambda_{A\rightarrow B}^{(n)}(\rho_A)
=\mbox{Tr}_{A, M}\left[ W (
 \rho_A\otimes  \nu^{\otimes m}_B
\otimes \omega_M
 ) W^\dag 
\right] \;,
\eeq
with $\mbox{Tr}_{A,M}$ being the partial trace with respect to $A$ and $M$ and with
\begin{eqnarray}
W= 
V_{n}S_{a_n}
\ldots V_{2}S_{a_2}V_{1} S_{a_1}\;.
\end{eqnarray}

\section{Information flow analysis}\label{sec:flow}

In this section we present a detailed analysis  of the communication model of Fig.~\ref{fig2} that allows us
to identify the two main  mechanisms that superintend at the information flow through such a scheme.
In particular using the results on PM channels presented in Sec.~\ref{sec:PM} we will show that there are efficient, finite size encodings, that
allows one to  prevent Alice messages from getting stuck into the mediator $M$. By itself this does not ensure
perfect transfer from $A$ to $B$. However, using ideas from the theory
of mixing channels~\cite{RAG, GOHM,NJP,TDV} we will show that one can force Alice messages to  
focus on Bob's memories.  

Equation~(\ref{sigmaB}) defines a quantum channel  $\Lambda_{A\rightarrow B}^{(n)}$ 
which connects the input port $A_{in}$ of Fig.~\ref{fig2} to the output port $B_{out}$, i.e. 
which
takes the input state of the $n$ memories of the register $A$ to the output state of the $m$ memories of  $B$. 
It is clearly a memory channel with $M$  and $A$ playing the role of
the multi-use environment $Y$ and with the memory effects arising from the  possibility that part of the signals encoded in some earlier $a_k$  will get stuck 
in $M$ interfering with the subsequent ones.
  As mentioned in the introduction, by identifying $M$ with a  network of interacting qubits, most of the spin  communication protocols~\cite{BOSEREV} can be represented in this model (an explicit example is presented
  in the next section).
  Having fixed the Hamiltonian $H$ an interesting problem  is then to determine wether or not  there exist suitable choices of for the local transformations $S_{a_k}$ and $S_{b_\ell^{(k)}}$,  the timing $\tau$, the encoding $\rho_A$ and possibly
the fiduciary state $\nu$ of the $B$ memories, that allows for a  {\em reliable} and possibly {\em efficient} information transmission from $A_{in}$ to $B_{out}$
(reliability referring to the possibility of achieving perfect transmission fidelity, efficiency referring instead to the effective number of
memory elements -- or coupling operations $S_a$ --  per transmitted qubit
Alice needs to use). Even though one can easily find examples of $H$ which admits simple answers for the above questions, in the general case this is  not an easy problem to solve. 
Interestingly enough however  that same questions admit an elegant solution if one consider the transmission
efficiency of the channel  $\Lambda_{A\rightarrow AB}^{(n)}$
which connects the input port $A_{in}$ to the joint output ports $A_{out}+B_{out}$ of Fig.~\ref{fig2}. In this case only $M$ plays the role of the environment $Y$ yielding the 
transformation 
 \beq \label{sigmaAB}
\Lambda_{A\rightarrow AB}^{(n)}(\rho_A)
=\mbox{Tr}_{M}\left[ W (
 \rho_A\otimes
  \nu^{\otimes m}_B
\otimes\omega_M) W^\dag 
\right] \;.
\eeq
Since $M$ is a finite dimensional system (say of dimension $d_M$) the map $\Lambda_{A\rightarrow AB}^{(n)}$ 
is a Perfect Memory channel characterized by a Kraus set composed by  a number of
elements (i.e. $d_M$) which is constant in $n$.
According to { Section \ref{sec:PM}}, we know that there exists a zero-error quantum error correcting code { ${\cal Q}_A^{(n)} \in {\cal H}_A^{(n)}$} of size
{ $|{\cal Q}_A^{(n)}| \geqslant \tfrac{d^n}{d_M^2(d_M^2 +1)}$} where $d^n$ is the dimension of the register $A$ (each memory having dimension $d$). 
Using such code Alice can reliably transfer info from $A_{in}$ to $A_{out}+B_{out}$ at a rate which is optimal 
for sufficiently large $n$ (i.e. it allows one to transfer one qubit per memory element). In particular consider the case of $\omega_M$ pure (the 
mixed case  can be treated analogously upon purification) and indicate with $P$ the projector { onto ${\cal Q}_A^{(n)}$}.
According to the analysis of Sec.~\ref{sec:dec} the global final state associated to a generic  input state $|\psi\rangle_A$ of the coding subspace ${\cal Q}_A^{(n)}$
admits then the following Schmidt decomposition
\begin{eqnarray}\label{imp}
&&W ( |\psi \rangle_{A} \otimes
|\nu \rangle_B^{\otimes m}
\otimes |\omega\rangle_M)  \\ \nonumber  
&&=\sum_{j} 
 \sqrt{ \lambda_j}   ( P_j  U_j \; |\psi \rangle_A \otimes
 |\nu\rangle_B^{\otimes m}
  ) \otimes |\zeta_j \rangle_M\;,
\end{eqnarray}
where $\sqrt{\lambda_j}$ are Schmidt coefficients, $U_j$ are  unitary transformations acting on $B$ and on the coding subspace of $A$, $P_j := U_j P U_j^\dag$ are orthogonal projectors 
on $AB$, and  $\{ |\zeta_j\rangle_M; j=0, \cdots, d_M-1\}$ is an orthonormal basis  of $M$~\cite{NIELSEN}.
Therefore for all density matrices $\rho_A$ of ${\cal Q}_A^{(n)}$
Eq.~(\ref{sigmaAB}) can be expressed in the following block form
\begin{eqnarray} \label{dd}
\Lambda_{A\rightarrow AB}^{(n)}(\rho_A)
= \sum_j
\lambda_j
\; P_j  U_j \; ( \rho_A \otimes
\nu_B^{\otimes m}
 ) \; U_j^\dag P_j  \;,
\end{eqnarray}
while the recovering of the information can be obtained by performing a (possibly joint) projective measurement
on $AB$ which distinguishes the orthogonal subspaces associated with the $P_j$. 
 
Of course from the prospective of using the link $M$ to transmit signals from $A_{in}$ to $B_{out}$  
the above result is quite unsatisfactory: {\em i)} it maps the messages into a possible joint subspace of $A$ and $B$, {\em ii)} it requires to perform joint operation on $A$ and $B$ to recover them.
 What is relevant for us, however, is the fact that using such encoding we can at least guarantee that Alice messages
 do not get stuck in  $M$.
Furthermore the size of the quantum error correcting code  { ${\cal Q}_{A}$}
is independent from the number $m$ of Bob's qubits, as long as he prepares such qubits into a fixed reference state  and couples them with $M_B$ through a 
sequence  of gates $S_{b_\ell^{(k)}}$ that are {\em known} to Alice. The real question is then determining whether or not   one can  force the output information  to be localized only on $B_{out}$. 

One indication that this may be possible  for at least some  spin network communication models considered so  far~\cite{BOSEREV}
comes from a change of perspective.
Consider in fact the channel that acts upon the mediator $M$ between two consecutive interactions with Alice's
qubits --- say the $k$-th and $(k+1)$-th interactions. This is the map that given a state $\omega'_M$ of $M$  takes it to  
\begin{eqnarray}
 \mbox{Tr}_{B^{(k)}} [ V_k (\omega'_M\otimes
 \nu_{B^{(k)}}^{\otimes m_k} 
  )V_k^\dag] 
= \underbrace{ {\cal N} \circ {\cal N} \circ \cdots \circ{\cal N}}_{m_k} ( \omega'_M) \;,
\nonumber \end{eqnarray}
where $\nu_{B^{(k)}}^{\otimes m_k} := \nu_{b_1^{(k)}}\otimes \cdots \otimes \nu_{b_{m_k}^{(k)}}$  is the input state of
the $k$-th sub-register and
"$\circ$" represents  superoperator composition and where ${\cal N}$ is the CPTP map
\begin{eqnarray}
{\cal N} (\omega_M') =\mbox{Tr}_b[ (S_b e^{-iH\tau})(\omega_M'\otimes  \nu_b
) (S_b e^{-iH\tau})^\dag]\;.
\end{eqnarray}
The above equations show that, under the repetitive interactions with the $B$ memories, the evolution of the mediator 
 $M$ can be described as a sequence of iterated  application of a channel. By general properties of CPTP maps it is know that 
in the asymptotic limit of $m_k\rightarrow \infty$  this will {\em typically} induce a relaxing behavior on $M$ which
will bring such system to a   final point  $\omega_M^*$ (the {\em fixed point} of the map) that is independent from the input state $\omega'_M$~\cite{RAG,GOHM,NJP,TDV}. 
This is known as {\em mixing} or {\em  relaxing} property of ${\cal N}$ and the associated convergency speed is know to the exponential fast in $m_k$ (typically referring to the fact that the vast majority of CPTP maps possess the mixing property).
We can hence use this result to say that for most choices of $S_b$ and $|\nu \rangle_b$, by choosing $m_k\gg1$ after any stage of Bob's coupling the system  $M$ can be brought {\em close}
 to a fix point $\omega_M^*$. In turn this implies that the channel $\Lambda_{A\rightarrow A}^{(n)}$ that connects  $A_{in}$ to $A_{out}$ will be {\em close}
 to the {\em memoryless} map  $\tilde\Lambda^{\otimes n}_{A\rightarrow A}$ with 
 $\tilde\Lambda_{A\rightarrow A} (\rho_A):= \mbox{Tr}_M [ S_A (\rho_A  \otimes \omega_M^*)S_A^\dag]$. 
For our purposes however the most appealing property of mixing channels is another one.
Suppose  in fact that the fix point  $\omega_M^*$ of ${\cal N}$ is  pure. In this case  one can verify that, in the asymptotic limit $m_k\rightarrow \infty$ all information contained in $M$ is transferred  to $B$ (the convergency speed being again exponentially fast in $m_k$). 
If the couplings $S_A$ that connects  $A$ with $M_A$ are then able to transfer Alice messages into $M$ we can use the mixing property of ${\cal N}$
to {\em drive} such information into $B$.

\section{Spin chain as a quantum link}

In the following we introduce a spin chain implementation of the quantum link model of Fig.~\ref{fig2} which allows one to
take full advantage of the analysis presented in the previous section. { As a result will be able to show that one can use such spin chain to reliably transfer quantum information from $A_{in}$ to $B_{out}$.}

The setup we are interested is similar to the one discussed in Ref.~\cite{DGB}: it uses as mediator $M$ a collection of $L$ identical, independent, $N$-long chains of $\frac{1}{2}$-spins. 
For $L=2$ it corresponds to the the Dual Rail spin chain communication line of Ref.~\cite{DB}: 
since the main properties of the model are already captured by the simpler case 
for the sake of simplicity in the following we will focus only on it
(the case  $L=1$~\cite{BOSE} has similar properties but it introduces some unnecessary complications in the discussion since
 it is lacking  of a fundamental symmetry in the messages encoding). The spins of the two chains that form $M$ are 
coupled through a (not necessarily first-neighbor)  Heisenberg-like Hamiltonian $H$ and are initially prepared into the {\em all-spin-down} configuration 
\begin{eqnarray}
|\omega\rangle_M 
:=\tiny{\left| \begin{array}{c}\downarrow \downarrow \cdots \downarrow \\ \downarrow \downarrow \cdots \downarrow \end{array}\right\rangle}\label{iniziale}\;,
\end{eqnarray}
which one can assume to be the ground state of $H$ (in writing the left hand side of this expression 
we have adopted the simplified notation
to represent the state of the first of the two spin chain with the elements of the first line, 
and the state of the second spin chain with the elements of the second line).
%
In this context the subsets $M_A$ and $M_B$ correspond, respectively, to the collection of the first spins and the last spins of the $L$ chains. To simplify the analysis we also assume Alice's and Bob's  memories to be qutrits characterized by the ortho-normal states $|0\rangle$, $|1\rangle$ and $|E\rangle$: Alice encodes her input messages into the subspace $\{ |0\rangle, |1\rangle\}^{\otimes n}$ of $A$ generated by the components $|0\rangle$ and $|1\rangle$ (therefore each input memory of $A$ will encode at most one qubit of info). Conversely we will identify $|E\rangle$ with the fiduciary state $|\nu\rangle$ of Bob's memories.
For the couplings $S_{a_k}$ and $S_{b_\ell^{(k)}}$ we chose
mappings which act as the identity  but for the following SWAP-like transformations, 
\beqa
|0\rangle \otimes \tiny{ \left|  \begin{array}{c}\downarrow\\ \downarrow\end{array}\right\rangle} \quad &\longleftrightarrow& \quad |E\rangle
\otimes 
\tiny{ \left| \begin{array}{c}\uparrow\\ \downarrow\end{array}\right\rangle} \;,\nonumber \\
|1\rangle \otimes \tiny{ \left| \begin{array}{c}\downarrow\\ \downarrow\end{array}\right\rangle} \quad 
&\longleftrightarrow&\quad  |E\rangle  \otimes 
\tiny{ \left| \begin{array}{c}\downarrow\\ \uparrow\end{array}\right\rangle} \;, \label{swap}
\eeqa
where we used the same  notation of (\ref{iniziale}) to represent the state of the spin chains.
Notice that such couplings allow the transferring of information from $a_k$ to $M_A$ only if  both spins of $M_A$ are pointing down:
furthermore when this happens $a_k$  moves from the message subspace spanned by the vectors $|0\rangle$ and $|1\rangle$ to the state $|E\rangle$. 
If instead $M_A$ has one or two spins up then the transmission is prevented and $a_k$ and $M_A$  keep their initial configurations. 
Similarly on Bobs side there is a net  flow of information form $M_B$ to $b_{\ell}^{(k)}$ only if the former has one spin up and one spin down: when this happens
  $b_{\ell}^{(k)}$ moves from the fiduciary state $|E\rangle$ to the message subspace and $M_B$ will be promoted to the all spin down state.
  In all the other cases instead $b_{\ell}^{(k)}$  and $M_B$ will keep their initial configurations.
It is worth commenting that this is a main difference with respect to other swapping strategies introduced in the past (e.g. see Ref.~\cite{NOSTRO}). In the present case
in fact, there is no guarantee that the repetitive application of the gates $S_{b_\ell^{(k)}}$ will drive the chain toward the ground state (i.e. the associated map
on $M$ is not necessarily mixing). Such property however still holds for at least the first excitation sector of the chains (this is the subspace of $M$ in which
both chains have exactly one spin  up each): as will be clear in the following this is enough to set up a reliable transmission protocol.

\subsection{Excitation distribution and Schmidt decomposition of the final state} 
We notice that  both the couplings~(\ref{swap})  and the spin chains Hamiltonian
 preserve a  global observable $Z$ which counts the number of "excitations" present in the system, i.e. 
\begin{eqnarray}
Z = Z_A +Z_B + Z_M\;,
\end{eqnarray}
 where the operators $Z_A$ and $Z_B$ count, respectively, the number of memory elements of $A$ and $B$ which are in the subspace spanned by the message
 vectors $|0\rangle$ and $|1\rangle$ (i.e the number of $a_k$ and $b_{\ell}^{(k)}$ which are NOT in $|E\rangle$), and $Z_M$ counts the number of spin up in the $M$. 
Furthermore since   at the beginning of the communication the state of the  memories of  $A$ are in   $\{|0\rangle, |1\rangle\}^{\otimes n}$, $B$  is in 
 $|E\rangle^{\otimes m}$, and $M$ is in the all spin-down state (\ref{iniziale})
  the  value of $Z$ during the whole protocol is set equal to $n$. 
 Therefore 
 the final state of the  $ABM$ system can be written as
\begin{eqnarray}
&&W ( |\psi \rangle_{A} \otimes |E\rangle_B^{\otimes m} \otimes |\omega\rangle_M
)    \label{import1} \\\nonumber  && \qquad =  \sqrt{ \eta_0} \;  |\Phi_\psi
\rangle_{AB} \otimes |\omega\rangle_M
  + \sqrt{1 -\eta_0} \; |\chi_\psi \rangle_{ABM}\;,
\end{eqnarray}
where is the probability $\eta_0$  
 to find $M$ in the ground state $|\omega\rangle_M$. By construction the vector
 $|\Phi_\psi\rangle_{AB}$ is an eigenstate of $Z_A+Z_B$ with eigenvalue $n$ (i.e. it contains exactly $n$ excitations)  while
 $|\chi_\psi\rangle_{ABM}$ satisfies  the following condition 
\begin{eqnarray} \label{ortogonalita}
{_M \langle \omega}  |\chi_\psi \rangle_{ABM} = 0\;, \quad
{_{AB} \langle \Phi_{\psi'}}  |\chi_\psi \rangle_{ABM} = 0\;,
\end{eqnarray}
for all inputs $|\psi\rangle_A$ and $|\psi'\rangle_A$
(the first inequality follows from the fact that $|\chi_\psi \rangle_{ABM}$
has at least
one excitation in $M$, the second from the fact that 
 it contains strictly less than $n$ excitations in $AB$).
The quantity $\eta_0$ may in general depends on the input $|\psi\rangle_A$. However it can be lower bounded by the joint probability that for at each step of the protocol {\em i)} Alice succeeds in inserting her
 qubit in $M_A$ and {\em ii)} Bob absorbs it in the associated sub-register, i.e. 
  \begin{eqnarray}
 \eta_0 \geqslant  \prod_{k=1}^n p_k  \;, \label{sss}
 \end{eqnarray}
with  $p_k$ being the probability that at the $k$-th step the info content of $a_k$ moves to $B^{(k)}$.  By induction  on can easily convince oneself
that the list of events associated with the probability $\Pi_n:= \prod_{k=1}^n p_k$ refer to a trajectory  in which for the whole duration of the  protocol   $M$ contains no more than
one excitation per chain. By the previous discussion of the coupling~(\ref{swap}) we know that for such states Bob's iterative procedure induces a relaxing behavior  that drives the chains toward the ground state with a probability (i.e. $p_k$ ) that increases with  $m_k$. Thus by choosing $m_k$ sufficiently big we can make
$p_k$ and  $\eta_0$ arbitrarily close to $1$ (for instance by choosing $p_k \geqslant  1 - \epsilon^{k}$ we get 
 $\eta_0\sim 1 -\epsilon$).
 The probabilities  $p_k$ possess yet another important property: {\em they do not depend upon the information $|\psi\rangle_A$ encoded into the memories $a_k$} (i.e. the value of 
 $p_k$ is independent  from the fact that $a_k$ was in $|0\rangle_{a_k}$ or in $|1\rangle_{a_k}$). This is an extremely 
 important property which comes from the fact that we are using a Dual Rail encoding: it will allow us to simplify the whole analysis and will provide us a simple
 way to construct a working transmission protocol.
Consider next the component $|\Phi_\psi \rangle_{AB}$ associated with $\eta_0$. 
By conservation of $Z$ this state can be decomposed in two orthogonal terms:
\begin{eqnarray}
|\Phi_\psi\rangle_{AB} = {\sqrt{\tfrac{\Pi_n}{\eta_0}}}
|E\rangle_A^{\otimes n}  \otimes |\Phi_\psi\rangle_B
 + { \sqrt{ 1- \tfrac{\Pi_n}{\eta_0}}}|\Delta_\psi\rangle_{AB}. \label{fq}
\end{eqnarray} 
The first component refers to the event associated with $\Pi_n$: here all excitations of the input state have moved
to $B$ (whose state is eigenvector of $Z_B$ with eigenvalue $n$) leaving $A$ into the state $|E\rangle_A^{\otimes n}$.
It is worth stressing that for the same reason why $\Pi_n$ is independent from the input state $|\psi\rangle_A$, then
$|\Phi_\psi\rangle_B$ is  a {\em faithful} encoding of such vector. 
This can be expressed by saying that  there  exists an isometry  from $A$ to $B$,
 which connects 
$|\psi\rangle_A$ to 
$|\Phi_\psi\rangle_B$, 
or equivalently that there exists a unitary transformation ${U'}$ on $AB$ which is independent from $| \psi \rangle_{A}$  such that 
 \begin{eqnarray}\label{faf}
{U'}(| \psi \rangle_{A} \otimes |E\rangle_B^{\otimes m} )= 
|E\rangle_A^{\otimes n}  \otimes |\Phi_\psi \rangle_B\;.
\end{eqnarray} 
A particular property of $|\Phi_\psi\rangle_B$ that is important to stress is the fact that {\em each one} of  the registers $B^{(k)}$ 
that compose $B$ contains exactly {\em one} excitation (specifically this is that same excitation that is initially contained in the $k$-th memory cell of Alice). 
The second component  of Eq.~(\ref{fq}) contains elements orthogonal to 
$|E\rangle_A^{\otimes n}  \otimes |\Phi_\psi\rangle_B$:
these include terms which have at least one 
of the $n$ excitations in $A$, plus terms which have no
excitation in $A$ but which have them in the "wrong" places of $B$ with respect to the components $|E\rangle_A^{\otimes n}  \otimes |\Phi_\psi\rangle_B$, i.e. they have not exactly one excitation in 
each one of the sub-registers $B^{(k)}$. 
In particular this implies that  $|\Delta_\psi\rangle_{AB}$ is orthogonal to just the $B$ component of 
$|E\rangle_A^{\otimes n}  \otimes |\Phi_{\psi'}\rangle_B$ for all possible
inputs $|\psi\rangle_A$ and $|\psi'\rangle_A$, i.e. 
\begin{eqnarray} \label{ffsf}
{_B \langle} \Phi_{\psi'}| \Delta_{\psi}\rangle_{AB}=0\;.
\end{eqnarray} 
Let us now compare Eq.~(\ref{import1}) with 
 Eq.~(\ref{imp}) that only applies for input states $|\psi\rangle_A$   belonging to the subspace { ${\cal Q}_A^{(n)}\subseteq
\{|0\rangle, |1\rangle\}^{\otimes n}$} which allows efficient communication from ${A}_{in}$ to ${A}_{out}+B_{out}$
(once we have choosen the values of  $m_k$ such subspace always exists by the analysis of Sec.~\ref{sec:flow}). 
From the orthogonality conditions of Eq.~(\ref{ortogonalita})
and from  the uniqueness of the Schmidt decomposition
 it follows that
that there must exist a component of~(\ref{imp})
 (say the one corresponding to $j=0$)  which coincides with the vector
$\sqrt{ \eta_0} \;  |\Phi_{\psi} \rangle_{AB}\otimes
|\omega\rangle_M$ (the only requirement being preventing $\eta_0$ from being
degenerate --- this can always be enforced by choosing $\Pi_n >1/2$).  
This  implies $\lambda_0 = \eta_0$, $|\zeta_0 \rangle_M = |\omega\rangle_M$, and
\begin{eqnarray}
 |\Phi_{\psi} \rangle_{AB} =
  U_0 \; ( |\psi \rangle_A \otimes |E\rangle_B^{\otimes m}) \;.
\end{eqnarray}
where we used the fact that $P_0U_0 = U_0 P$ and that { $ |\psi \rangle_A \in {\cal Q}_A^{(n)}$}.
Putting this together with Eqs.~(\ref{fq}) and (\ref{faf}) we get that also $|\Delta_\psi\rangle_{AB}$ must provide us with a unitary encoding of
the input state, i.e.  there must exists a unitary ${U''}$ 
on $AB$ such that 
\begin{eqnarray} \label{uu}
 |\Delta_\psi\rangle_{AB} = { U ''}\; ( |\psi \rangle_A \otimes |E\rangle_B^{\otimes m}) \;,
\end{eqnarray}
(by linearity this in turn implies that the
probability $\eta_0$ is constant for all $|\psi\rangle_A \in {\cal Q}_A^{(n)}$). 

Another important thing we can learn from the comparison of  Eq.~(\ref{import1}) with  Eq.~(\ref{imp}) is obtained by taking the reduced density operator of $AB$. 
 In particular Eq.~(\ref{import1}) yields the following block matrix 
  \begin{eqnarray}
\label{import11} 
{ \eta_0} \;  |\Phi_{\psi} \rangle_{AB} \langle \Phi_\psi | + (1-\eta_0)
\mbox{Tr}_M[ |\chi_\psi\rangle_{ABM}\langle \chi_\psi|] \;,\end{eqnarray}
which for input states of ${\cal Q}_A^{(n)}$  must coincide with Eq.~(\ref{dd}) which is also in block form.
Since we have already identified $ |\Phi_{\psi} \rangle_{AB} \langle \Phi_\psi |$ with the $j=0$ block of Eq.~(\ref{dd})
this implies that  the remaining term of Eq.~(\ref{import11}) must fit on the $j\geqslant 1$ blocks of Eq.~(\ref{dd}), i.e. 
  \begin{eqnarray}
\label{fff}  
&&(1-\eta_0)
\mbox{Tr}_M[ |\chi_\psi\rangle_{ABM}\langle \chi_\psi|] 
 \nonumber \\
&&= 
\sum_{j}   \lambda_j 
\; P_j U_j \; ( \rho_A \otimes |E\rangle_B\langle E|^{\otimes m}) \; U_j^\dag P_j\;.
\end{eqnarray}
Let us now take the projection of Eq.~(\ref{import11})  into the subspace orthogonal to  the one that supports the vectors $|E\rangle_A^{\otimes n}  \otimes |\Phi_\psi\rangle_B$ (this is the subspace which contains all vectors which have strictly less than $n$ excitation on $B$ plus the vectors which have $n$ excitations located in the {\em wrong places}). 
This is 
  \begin{eqnarray} \label{si}
\sigma_{AB}   \label{import111} 
&=& \tfrac{ \eta_0 - \Pi_n}{1-\Pi_n}  |\Delta_{\psi} \rangle_{AB} \langle \Delta_\psi | \nonumber \\
&+& (1-\eta_0)
\mbox{Tr}_M[ |\chi_\psi\rangle_{ABM}\langle \chi_\psi|] \;.
\end{eqnarray}
The important observation here is to notice that notwithstanding the projection the information about the input state is still well
preserved into $AB$. For the first component this is a consequence of Eq.~(\ref{uu}) while for the second one this comes  from Eq.~(\ref{fff}).

\subsection{The protocol}

Now we have all elements to construct the protocol.

\begin{itemize}
\item The first step is to fix the numbers $m_k$ of the memories which compose  Bob's $k$-th sub-register $B^{(k)}$: 
the choice  of these parameters is determined by how close to $1$ we want $\Pi_n$ of Eq.~(\ref{sss}).  This will determine the probability of success.
 For instance let us assume that we select $\Pi_n > 1 -\epsilon$ for some given $\epsilon >0$ (in any case we will require $\epsilon<1/2$ to make sure that  $\eta_0$ is a non degenerate Schmidt coefficient).  
\item Alice writes her messages into the quantum error correcting code { ${\cal Q}_A^{(n)}$} of $A$ which is associated with the choice of $m_k$ of the previous point.
\item Alice and Bob start their sequences of repetitive  operations in which the memories $A$ and $B$ are put in contact with $M$ according to Eq.~(\ref{swap}).
\item At the end of the coupling stage the state of the system is as in Eq.~(\ref{imp}). At this point Bob performs a two values projective measurement on $B$ to check wether or not
his memory $B$ fits into the subspace that
support the vectors  $|\Phi_\psi\rangle_B$ of Eq.~(\ref{fq})
(i.e. it projects $B$ into a subspace that have exactly {\em  one
excitation in each of the sub-registers} $B^{(k)}$).
This measurement is the analogous to the parity check of the Dual Rail protocol. 
\item From Eqs.~(\ref{import1}) and (\ref{fq}) 
we know that  if the result of the measurement is YES then the total system will be projected into the state 
\begin{eqnarray} 
|E\rangle_A^{\otimes n}  \otimes |\Phi_\psi\rangle_B 
 \otimes |\omega\rangle_M\;,
\end{eqnarray}
which thanks to Eq.~(\ref{faf}) contains in $B$ the input information sent by Alice (to recover it Bob needs only to perform a local operation on $B$).
The YES outcome happens with probability $\Pi_n$. 
\item With probability $1-\Pi_n<\epsilon$ the local measure performed by Bob will produce the outcome NO projecting $AB$ into the state of Eq.~(\ref{si}) which still retains the coherence of the input message.

\end{itemize}

Adopting the above procedure, with probability $\Pi_n$  we can thus transfer more than $n  - \log_2 [d_M^2 (d_M^2 +1)]$  
 qubits of information from $A_{in}$ to $B_{out}$ by using $n$ SWAP-in operation
 on Alice side. 
 Without doing anything else, on can hence achieve an average transfer fidelity which is larger than $\Pi_n$ and which can be made arbitrarily close to one by a proper choice of the numbers $m_k$ of Bob's memories.  
How good is this result compared with other techniques? 
In effect the same average transfer fidelity is attainable  
without the PM encoding strategy by only adopting a pure Dual Rail downloading technique~\cite{DB}.
 The relevance of the protocol however relies on the fact that, {\em even when Bob's
 measurement produces the NO outcome} (this happens with probability $1-\Pi_n$)
 {\em Alice messages are not completely lost}. 
 In this case in fact information still retains its coherence  but it is "delocalized"  among $A_{out}$ and $B_{out}$. 
 Alice and Bob can thus still try to get it back increasing the transmission fidelity to $1$. In the following we present
 a couple of possible strategies that can be adopted to achieve this goal. 

 The  easiest solution for compensating  the NO event
   is to admit that Alice and Bob are provided with some shared e-bit. In this case   when getting the NO outcome from his measurement
Bob can ask Alice to teleport to him all her $n$ memory elements. 
Since there are $n$ qutrits these takes $n \log_2 3$ e-bit and $2n \log_2 3$ bits of classical communication from Alice to Bob.
After teleportation Bob has direct access to the state Eq.~(\ref{si}) which still encodes perfectly the input message: to recover it he has only to perform 
a projective measurement that distinguishes the subspaces associated with the projectors $P_j$ of Eq.~(\ref{imp}) (notice that $|\Delta_\psi\rangle_{AB}$ is still 
within the $P_0$) and apply the proper unitaries transformations.
 Considering that the probability of the NO outcome is $1-\Pi_n$ 
 the average cost  of the whole procedure will thus be 
 $(1-\Pi_n) \times  \log_2 3$ e-bit and $(1-
\Pi_n)\times 2 \log_2 3$ bits of classical communication from Alice to Bob per 
transmitted qubit~\cite{NOTA11}.

An alternative solution relies in first depleting the link $M$ from all its
excitations resetting it to $|\omega_M\rangle$. This will require some sort of external (not necessarily coherent) control on $M$.  It is worth noticing that 
since the messages are safely confined in the memories $AB$
the resetting operation can be done without compromising them 
(in other communication scenarios~\cite{BOSEREV} this  will not be the case). 
Once more Alice can thus try to transfer  to Bob the content of her memory qubits by adopting the same protocol we considered before, i.e. by recursively coupling them with  $M_A$. 
 It should be notice however that in order to do so she has first to {\em translate} the information contained
in the memories $A$ into the coding subspace ${\cal Q}_A^{(n)}$. Consider that after the first step of the protocol, each one of  Alice memories $a$
span a three dimensional space generated by $|0\rangle$, $|1\rangle$ and $|E\rangle$ (compare this with the input state of the memories {\em before} the fist transmission stage when Alice messages spanned only the subspace generated by $|0\rangle$ and $|1\rangle$). In total this corresponds to $n \log_2 3$ qubits that we need to transfer. Such subspace can be fitted in the coding subspace ${\cal Q}_A^{(n_1)}$ of $n_1 = n \log_2 3 + \log_2[d_M^2 (d_M^2 +1)]$ 
new carriers (the extra term $\log_2[d_M^2 (d_M^2 +1)]$ being 
required to compensate with the fact that ${\cal Q}_A^{(n_1)}$ is 
a proper subspace of ${\cal H}_A^{(n_1)}$).
Applying the transferring protocol to these new carriers will
allow Bob to recover the messages with probability $\Pi_{n_1}$, and
will increase the average fidelity from $\Pi_{n}$ to
$\Pi_n + (1-\Pi_n) \Pi_{n_1}$. The average number of channel uses employed in the process will be instead equal to
$n + n_1(1-\Pi_n)$, corresponding to a rate $\tfrac{n}{n+n_1(1-\Pi_n)}\simeq \tfrac{1}{1 + (1 -\Pi_n) \log_2 3}$. Reiterating the whole procedure many times under the assumption of uniform success probabilities, $\Pi_{n} = \Pi_{n_1} =  \Pi_{n_2} =\cdots = 1 -\epsilon$,
will gives us unitary fidelity with asymptotic rate $\simeq 1 - \epsilon \log_2 3$ (in deriving this expression we assumed $\epsilon \log_2 3 <1$).

\subsection{Two-chains/one qubit and beyond}

The protocol we discussed in the previous section uses $L=2$ 
parallel spin chains to transfer $\sim 1$ qubit of info per each channel
use (i.e. per each couplings $S_a$), i.e.  $\sim 1/2$ qubit per channel use per
chain. Since  each chain should be capable to transport one qubit of info per swap this is not very efficient, instead we would like to have $\sim 1$
qubit per channel use per chain. 
One way to attain such rate is to replace the standard Dual Rail 
encoding~\cite{DB} with the generalized
dual rail encoding of Ref.~\cite{DGB} which, in the limit of  $L\gg1$ parallel chains, allows one to transfer $\sim 1$ one qubit per parallel chain. 
In this case the local operations~(\ref{swap}) will be replaced by analogous
coupling transformations which couple  the first (last)  spins of the $L$ chains with Alice (Bob) memory elements.

\section{Conclusion}

{ 

In conclusion, we have review the physics of PM channels by generalizing it to the case of sub-exponential
environment. We also established a connection between quantum link communication and PM, by presenting 
a general multi-use protocol which allows Alice and Bob to 
faithfully communicate through a spin chain quantum link without resetting it at each channel transmission. The protocol succeed in faithfully 
transferring the messages with an arbitrarily high
probability that can be tuned by means of Bob's operations.  The protocol originates from the merging of two apparently distinct ideas:
the codes for PM channels and the mixing property of quantum channels. This is a new approach which in principle can be exploited in other contexts.

We believe that this paper paves the way for deeper studies on quantum links in communication scenarios. One possible direction to explore is the relation between the amount of resources needed and the invoked coupling Hamiltonian.

}

\subsection*{Acknowledgments}
{
V.G. acknowledges  financial support of the Quantum Information Program of Centro Ennio De Giorgi of SNS.
D.B. acknowledges support from the QIP-IRC and Wolfson College, Oxford.
S.M. acknowledges financial support from the FET-Open grant agreement CORNER, 
number FP7-ICT-213681. 
}
 


\begin{references}


\bibitem{BOSEREV} 
S. Bose, Contemporary Physics {\bf 48}, 13 (2007).

\bibitem{BM04}
G. Bowen and S. Mancini,
Phys. Rev. A \textbf{69}, 012306 (2004).

\bibitem{RGF} D. Rossini, V. Giovannetti, and R. Fazio, 
Int. J. Quant. Inf. {\bf 5}, 439 (2007).

\bibitem{BBMB08}
A. Bayat, D. Burgarth, S. Mancini and S. Bose,
Phys. Rev. A \textbf{77}, 050306 (2008).

\bibitem{NIELSEN}  
M. A. Nielsen and I. L. Chuang, {\em Quantum Computation 
and Quantum Information} (Cambridge University Press, Cambridge, 2000).

\bibitem{KW05}
D. Kretschmann and R. Werner,
Phys. Rev. A \textbf{72}, 062323 (2005).

\bibitem{vg05} 
V. Giovannetti,  {J. Phys. A} {\bf 38},  10989 (2005).

\bibitem{LIND}
G. Lindblad,  Commun. Math. Phys. {\bf 48} 116 (1976).
A. S. Holevo,  Probl. Inform. Transm. {\bf 8},  63 (1972).

\bibitem{Stine}
W. F.  Stinespring,  Proc. Am. Math. Soc. {\bf 6}, 211
(1955). 


\bibitem{KRAUSBOOK}
K. Kraus, {\em States, Effects, and Operations: Fundamental Notions 
of Quantum Theory}, Lecture Notes in Physics, Vol. {\bf 190} 
(Springer-Verlag, Berlin 1983).

\bibitem{KLV00}
E. Knill, R. Laflamme and L. Viola, Phys. Rev. Lett.
\textbf{84}, 2525 (2000).

\bibitem{RADON} H. Tverberg, J. London Math. Soc. {\bf 41}, 123 (1966). 


\bibitem{NOTA2} We explicitly allow for several  Bob's qubits for each single Alice qubit to  cope
 with possible dispersion effects in the information propagation through the quantum link.

\bibitem{RAG}
  M. Raginsky, Phys. Rev. A {\bf 65}, 032306 (2002).

\bibitem{GOHM} 
  R. Gohm, {\em Noncommutative Stationary Processes} 
  (Springer, New York, 2004). 

\bibitem{NJP} 
  D. Burgarth and V. Giovannetti,  New J. Phys. {\bf 9}, 150 (2007). 

\bibitem{TDV}
  B. M. Terhal and D. P. DiVincenzo, Phys. Rev. A {\bf 61} 022301 (2000).

\bibitem{DGB}
D. Burgarth, V. Giovannetti, and S. Bose, 
J. Phys. A: Math. Gen. {\bf 38} 6793 (2005).

\bibitem{DB} D. Burgarth, and S. Bose, Phys. Rev. A {\bf 71} 052315 (2005).

\bibitem{BOSE} S. Bose Phys. Ref. Lett. {\bf 91}, 207901 (2003).


\bibitem{NOSTRO} V. Giovannetti and D. Burgarth, Phys. Rev. Lett. {\bf 96}, 030501 (2006).

\bibitem{NOTA11} Admitting the possibility of prior shared entanglement among Alice and Bob, a variant of the Dual Rail protocol~\cite{DB} can be found in which the average transmission  fidelity can be made unitary by means of 
$(1-\Pi)$ e-bits and  $2$ bits of classical communication per transmitted qubit.
This variant requires Alice NOT to transfer her messages directly but instead 
 to send to Bob one half $n$ e-bits couples that she has locally prepared. In case the  transmission succeeds Alice will then safely teleport to Bob her messages. If instead
 the  transmission fails she will use the shared entangled bit to teleport the messages.

\end{references}
\end{document}